\documentclass[11pt]{article}
\usepackage[left=1in,top=1in,right=1in,bottom=1in]{geometry}
\usepackage{times}

\usepackage{amsmath}
\usepackage{amsthm}

\usepackage{siunitx}

%\pdfminorversion=5
%\newtheorem{observation}{Observation}
\newtheorem{definition}{Definition}
\newtheorem{theorem}{Theorem}

%TODO: change title here too!
%\pdfinfo{
%/Title (Bayesian Opponent Exploitation in Imperfect-Information Games)
%/Author (Sam Ganzfried)}

%\firstpage{1} 
%\makeatletter 
%\setcounter{page}{\@firstpage} 
%\makeatother 
%\articlenumber{x}
%\doinum{10.3390/------}
%\pubvolume{xx}
%\pubyear{2016}
%\copyrightyear{2016}
%%\externaleditor{Academic Editor: name}
%\history{Received: date; Accepted: date; Published: date}

\usepackage{verbatim}

\usepackage{rotating}
\usepackage{algorithm}
\usepackage[noend]{algpseudocode}

\usepackage{tabularx}
\usepackage{subfigure}
\usepackage{graphicx}
\usepackage{url}
\usepackage{bigfoot}
\usepackage{multirow}

\usepackage{silence}
\WarningFilter{hyperref}{Option}

\setcounter{secnumdepth}{2}

\begin{document}

\title{Optimization-Based Algorithm for Evolutionarily Stable Strategies against Pure Mutations}

\author{Sam Ganzfried\\
Ganzfried Research\\
sam@ganzfriedresearch.com \\
%College of Engineering and Computing \\
%11200 S.W. 8th St. Modesto A. Maidique Campus, ECS 381 \\
%Miami, FL 33199
}

% If this is an expanded version of a conference paper, please cite it here: enter the full citation of your conference paper, and add $^\dagger$ in the end of the title of this article.
%\conference{International Conference on Game Theory, 2017}

% Authors, for the paper (add full first names)
%\Author{Sam Ganzfried$^{1}$}
%\Author{Anonymous}
% Authors, for metadata in PDF
%\AuthorNames{Sam Ganzfried}
%\AuthorNames{Anonymous}

% Affiliations / Addresses (Add [1] after \address if there is only one affiliation.)
%\address[1]{%
%$^{1}$ \quad Florida International University, School of Computing and Information Sciences; sam.ganzfried@gmail.com}
%%$^{2}$ \quad Stanford University, School of Mathematics; qysun@stanford.edu}
%\address{Anonymous}

% Contact information of the corresponding author
%\corres{Correspondence: sam.ganzfried@gmail.com}
%\corres{Anonymous}

\date{\vspace{-5ex}}

\maketitle

\begin{abstract}
Evolutionarily stable strategy (ESS) is an important solution concept in game theory which has been applied frequently to biological models. Informally an ESS is a strategy that if followed by the population cannot be taken over by a mutation strategy that is initially rare. Finding such a strategy has been shown to be difficult from a theoretical complexity perspective. We present an algorithm for the case where mutations are restricted to pure strategies, and present experiments on several game classes including random and a recently-proposed cancer model. Our algorithm is based on a mixed-integer non-convex feasibility program formulation, which constitutes the first general optimization formulation for this problem. It turns out that the vast majority of the games included in the experiments contain ESS with small support, and our algorithm is outperformed by an existing support-enumeration based approach. However we suspect our algorithm may be useful in the future as games are studied that have ESS with potentially larger and unknown support size.
\end{abstract}

\section{Introduction}
\label{se:intro}
While Nash equilibrium has emerged as the standard solution concept in game theory, it is often criticized as being too weak: often games contain multiple Nash equilibria (sometimes even infinitely many), and we want to select one that satisfies other natural properties. For example, one popular concept that refines Nash equilibrium is evolutionarily stable strategy (ESS). A mixed strategy in a two-player symmetric game is an evolutionarily stable strategy if, informally, it is robust to being overtaken by a mutation strategy. Formally, mixed strategy $x^*$ is an ESS if for every mixed strategy $x$ that differs from $x^*$, there exists $\epsilon_0 = \epsilon_0(x) > 0$ such that, for all $\epsilon \in (0,\epsilon_0)$, 
\begin{equation}
(1-\epsilon)u_1(x,x^*) + \epsilon u_1(x,x) <  (1-\epsilon)u_1(x^*,x^*)+\epsilon u_1(x^*,x).
\label{eq:ess}
\end{equation}
From a biological perspective, we can interpret $x^*$ as a distribution among ``normal'' individuals within a population, and consider a mutation that makes use of strategy $x$, assuming that the proportion of the mutation in the population is $\epsilon$. In an ESS, the expected payoff of the mutation is smaller than the expected payoff of a normal individual, and hence the proportion of mutations will decrease and eventually disappear over time, with the composition of the population returning to being mostly $x^*$. An ESS is therefore a mixed strategy of the column player that is immune to being overtaken by mutations. ESS was initially proposed by mathematical biologists motivated by applications such as population dynamics (e.g., maintaining robustness to mutations within a population of humans or animals)~\cite{Smith73:Logic,Smith82:Evolution}. A common example game is the 2x2 game where strategies correspond to an ``aggressive'' Hawk or a ``peaceful'' Dove strategy. A paper has recently proposed a similar game in which an aggressive malignant cell competes with a passive normal cell for biological energy, which has applications to cancer eradication~\cite{Dingli09:Cancer}. 

While Nash equilibrium is defined for general multiplayer games, ESS is defined specifically for two-player symmetric games. ESS is a refinement of Nash equilibrium. In particular, if $x^*$ is an ESS, then $(x^*,x^*)$ (i.e., the strategy profile where both players play $x^*$) is a (symmetric) Nash equilibrium~\cite{Maschler13:Game}. Of course the converse is not necessarily true (not every symmetric Nash equilibrium is an ESS), or else ESS would be a trivial refinement. In fact, ESS is not guaranteed to exist in games with more than two pure strategies per player (while Nash equilibrium is guaranteed to exist in all finite games). For example, while rock-paper-scissors has a mixed strategy Nash equilibrium (which puts equal weight on all three actions), it has no ESS~\cite{Maschler13:Game} (that work considers a version where payoffs are 1 for a victory, 0 for loss, and $\frac{2}{3}$ for a tie).

There exists a polynomial-time algorithm for computing Nash equilibrium in two-player zero-sum games, while for two-player general-sum and multiplayer games computing a Nash equilibrium is PPAD-complete~\cite{Chen05:Nash,Chen06:Settling} and it is widely conjectured that no efficient (polynomial-time) algorithm exists. However, several algorithms have been devised that perform well in practice~\cite{Berg17:Exclusion,Porter08:Simple,Govindan03:Global,Sandholm05:Mixed,Lemke64:Equilibrium}. For ESS, there have been some recent hardness results as well (note that in general computing an ESS is at least as hard as computing a Nash equilibrium since it is a refinement). The problem of computing whether a game has an ESS was shown to be both NP-hard and CO-NP hard and also to be contained in $\Sigma ^P _2$ (the class of decision problems that can be solved in nondeterministic polynomial time given access to an NP oracle)~\cite{Etessami08:Computational}. Subsequently it was shown that the exact complexity of this problem is that it is $\Sigma ^P _2$-complete~\cite{Conitzer13:Exact}. Note that this result is for the complexity of determining whether an ESS exists in a given game (as discussed above there exist games which have no ESS), and not for the complexity of actually computing an ESS in games for which one exists. 

Several approaches have been proposed for computing ESS based on performing a brute-force enumeration of supports, solving for a potential ESS with each support in turn (in theory these algorithms find all ESS when the enumeration is continued over all possible supports)~\cite{Haigh75:Game,Abakuks80:Conditions,Broom13:Game-Theoretical,Bomze92:Detecting,Mcnamara97:General}. While these approaches would perform very well for games that have an ESS with small support, they can potentially run very slowly for games that have ESS with large or unknown support size, as there are $\binom{n}{k}$ total supports of size $k$ in a game with $n$ pure strategies which is exponential. (Note also that these algorithms could alternatively enumerate the supports from largest to smallest for games known to have ESS with large support. The main challenging case would be if there exist ESS with unknown support of medium size. Furthermore, for games with no ESS, which are possible for more than two pure strategies, these approaches will take a long time to determine that no solution exists.) Similarly for Nash equilibrium computation, the support enumeration algorithm~\cite{Porter08:Simple} has been demonstrated to outperform approaches based on optimization frameworks~\cite{Sandholm05:Mixed,Lemke64:Equilibrium} on many standard test games, which contain equilibria with small support; however the optimization-based approaches perform better on games that have equilibria with medium-sized support. 

We present an algorithm that computes an ESS that satisfies a less restrictive condition that only pure strategy mutations cannot successfully invade the population (while standard ESS rules out all mixed strategy mutations). The algorithm is based on modeling the problem as a mixed-integer non-convex quad\-rat\-ically-constrained feasibility program. We present experiments on uniformly-random games, games generated from the GAMUT repository, and a recently-proposed cancer model. As it turns out, for all of these classes there typically exists an ESS with small support, and therefore it is not surprising that the support enumeration algorithm outperforms our optimization-based formulation for these games. For games that ESS with unknown medium support size, we expect our new algorithm would outperform the enumeration ones. Note also that our algorithm has several parameters that can be modified to enable the algorithm to tradeoff between speed and performance, while the enumeration algorithms have no parameters and will take too long once the number of supports ($2^n$) gets large. In the future we plan to track down several classes of important games that have ESS with medium-sized supports to better ascertain the relative performance between the algorithms.

\section{Evolutionarily stable strategies against pure mutations}
\label{se:ess}
We define an \emph{Evolutionarily Stable Strategy against Pure Mutations} (ESSPM) as in Equation~\ref{eq:ess} except that we only require that the inequality holds for all pure strategies $x$ that differ from $x^*$. In order to model the problem of computing an ESSPM as an optimization problem, we recall a widely-known alternative definition of ESS that has been proven to be equivalent to the initial one~\cite{Maschler13:Game}.

%We describe a novel formulation of the problem as a mixed-integer non-convex quadratically-constrained quadratic feasibility program. We can apply existing algorithms from the QVXPY software package that can solve problems of this type. These algorithms are heuristics without provable guarantees of convergence, though converge has been demonstrated in several benchmark problems. 

\begin{definition}
\label{de:ess}
A mixed strategy $x^*$ in a two-player symmetric game is an evolutionarily stable strategy (ESS) if for every mixed strategy $x$ that differs from $x^*$ there exists $\epsilon_0 = \epsilon_0(x) > 0$ such that, for all $\epsilon \in (0,\epsilon_0)$,
$$(1-\epsilon)u_1(x,x^*) + \epsilon u_1(x,x) < (1-\epsilon)u_1(x^*,x^*)+\epsilon u_1(x^*,x).$$
\end{definition}

\begin{theorem}
A strategy $x^*$ is evolutionarily stable if and only if for each $x \neq x^*$ exactly one of the following conditions holds:
\begin{itemize}
\item $u_1(x,x^*) < u_1(x^*,x^*)$ 
\item $u_1(x,x^*) = u_1(x^*,x^*)$ and $u_1(x,x) < u_1(x^*,x)$
\end{itemize}
\label{th:ess}
\end{theorem}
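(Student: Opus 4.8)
The plan is to reduce the defining inequality, for each fixed $x \neq x^*$, to a statement about the sign of a single affine function of $\epsilon$ just to the right of $0$, and then read off the two cases from the constant term of that function. First I would fix $x \neq x^*$ and set
\[
g(\epsilon) = \big[(1-\epsilon)u_1(x^*,x^*) + \epsilon u_1(x^*,x)\big] - \big[(1-\epsilon)u_1(x,x^*) + \epsilon u_1(x,x)\big].
\]
Writing $A = u_1(x^*,x^*) - u_1(x,x^*)$ and $B = u_1(x^*,x) - u_1(x,x)$, this becomes $g(\epsilon) = A + \epsilon(B-A)$, which is affine in $\epsilon$ with $g(0) = A$. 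By Definition~\ref{de:ess}, $x^*$ is an ESS exactly when for every such $x$ there is some $\epsilon_0 > 0$ with $g(\epsilon) > 0$ for all $\epsilon \in (0,\epsilon_0)$, so it suffices to show this condition on $g$ is equivalent to ``$A > 0$'' or ``$A = 0$ and $B > 0$''.

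Next I would do a three-way case split on the sign of $A = g(0)$. If $A > 0$, continuity of the affine map $g$ gives $g > 0$ on some interval $(0,\epsilon_0)$, which is the first bullet; here the threshold $\epsilon_0$ is allowed to depend on $x$, which is precisely what Definition~\ref{de:ess} permits. If $A < 0$, then $g(0) < 0$, so by continuity $g < 0$ near $0$ and no valid $\epsilon_0$ exists, i.e.\ the ESS condition fails for this $x$. If $A = 0$, then $g(\epsilon) = \epsilon B$, which is positive for all $\epsilon > 0$ if and only if $B > 0$; combined with the defining equality $u_1(x,x^*) = u_1(x^*,x^*)$ of this case, this is the second bullet. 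Assembling the three cases yields the stated equivalence for each fixed $x$, and quantifying over all $x \neq x^*$ gives the theorem. I would close by noting that the two bullets are mutually exclusive --- one asserts $u_1(x,x^*) < u_1(x^*,x^*)$ and the other asserts equality --- so ``at least one holds'' and ``exactly one holds'' are the same assertion, which justifies the phrasing in the statement.

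I do not expect a genuine obstacle: the content is essentially the remark that an affine function of $\epsilon$ required to be positive immediately to the right of $0$ is governed by its value and slope at $0$. The only steps deserving care are handling the boundary case $A = 0$ through the linear term $\epsilon B$ rather than through a continuity argument (since there $g(0) = 0$, so the sign just to the right of $0$ is genuinely decided by $B$), and being explicit that the per-$x$ choice of $\epsilon_0(x)$ in the definition is exactly what makes the $A > 0$ case go through without needing a bound on $\epsilon_0$ that is uniform over all $x$.
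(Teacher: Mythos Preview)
Your argument is correct: reducing the defining inequality to the sign of the affine function $g(\epsilon) = A + \epsilon(B-A)$ near $0$ and splitting on the sign of $A = g(0)$ is exactly the standard proof, and your handling of the boundary case $A = 0$ via the linear term $\epsilon B$ is the right way to recover the second bullet. The paper itself does not supply a proof of this theorem; it simply states the result and attributes it to the textbook of Maschler, Solan, and Zamir, so there is no in-paper argument to compare against --- your proof is the intended one.
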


\begin{definition}
\label{de:esspm}
A mixed strategy $x^*$ in a two-player symmetric game is an evolutionarily stable strategy against pure mutations (ESSPM) if for every pure strategy $x$ that differs from $x^*$ there exists $\epsilon_0 = \epsilon_0(x) > 0$ such that, for all $\epsilon \in (0,\epsilon_0)$,
\begin{equation}
(1-\epsilon)u_1(x,x^*) + \epsilon u_1(x,x) < (1-\epsilon)u_1(x^*,x^*)+\epsilon u_1(x^*,x).
\label{eq:esspm}
\end{equation}
\end{definition}

\begin{theorem}
A strategy $x^*$ is evolutionarily stable against pure mutations if and only if for each pure strategy $x \neq x^*$ exactly one of the following conditions holds:
\begin{itemize}
\item $u_1(x,x^*) < u_1(x^*,x^*)$ 
\item $u_1(x,x^*) = u_1(x^*,x^*)$ and $u_1(x,x) < u_1(x^*,x)$
\end{itemize}
\label{th:esspm}
\end{theorem}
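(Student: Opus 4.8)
The plan is to observe that Theorem~\ref{th:esspm} is obtained from Theorem~\ref{th:ess} by restricting the universally quantified strategy $x$ to range over pure strategies only, and that the proof of Theorem~\ref{th:ess} is carried out by fixing one $x \neq x^*$ at a time and analyzing the inequality as a condition on the scalar $\epsilon$ alone; nothing in that per-$x$ analysis uses the fact that $x$ is mixed rather than pure. Hence the same argument, applied verbatim with ``mixed strategy $x$'' replaced by ``pure strategy $x$'', yields Theorem~\ref{th:esspm}. For completeness I would still spell out the one-variable computation that underlies it.

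Concretely, I would fix a pure strategy $x \neq x^*$ and define $g(\epsilon) = \big[(1-\epsilon)u_1(x^*,x^*)+\epsilon u_1(x^*,x)\big] - \big[(1-\epsilon)u_1(x,x^*) + \epsilon u_1(x,x)\big]$, so that Equation~\eqref{eq:esspm} asserts the existence of $\epsilon_0 = \epsilon_0(x) > 0$ with $g(\epsilon) > 0$ for all $\epsilon \in (0,\epsilon_0)$. Expanding gives $g(\epsilon) = a + \epsilon(b - a)$, where $a = u_1(x^*,x^*) - u_1(x,x^*)$ and $b = u_1(x^*,x) - u_1(x,x)$, so $g$ is affine in $\epsilon$ with $g(0) = a$. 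I would then split on the sign of $a$: if $a > 0$, continuity of $g$ gives $g(\epsilon) > 0$ on some $(0,\epsilon_0)$, which is exactly the first bullet; if $a < 0$, then $g(\epsilon) < 0$ for small $\epsilon > 0$, so Equation~\eqref{eq:esspm} fails; and if $a = 0$, then $g(\epsilon) = \epsilon b$, so the condition holds iff $b > 0$, i.e. $u_1(x,x) < u_1(x^*,x)$, which together with $a = 0$ is exactly the second bullet. The converse direction is immediate: each bullet exhibits the required $\epsilon_0$.

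Finally I would note that the two bullets are mutually exclusive, since the first requires $u_1(x,x^*) < u_1(x^*,x^*)$ while the second requires $u_1(x,x^*) = u_1(x^*,x^*)$; this upgrades ``one of the following conditions holds'' to ``exactly one of the following conditions holds.'' Since the equivalence above holds for each fixed pure $x \neq x^*$ separately, and Definition~\ref{de:esspm} quantifies over all such $x$, this establishes the theorem. I do not anticipate a genuine obstacle here; the only point that merits a little care is confirming that the reasoning supporting Theorem~\ref{th:ess} really is applied one $x$ at a time and is insensitive to whether $x$ is pure or mixed, so that narrowing the quantifier in Definition~\ref{de:esspm} does not disturb any step of that argument.
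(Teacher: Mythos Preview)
Your proposal is correct and matches the paper's own treatment: the paper simply states that Theorem~\ref{th:esspm} follows from the same reasoning as Theorem~\ref{th:ess}, which is precisely your observation that the per-$x$ affine-in-$\epsilon$ analysis is insensitive to whether $x$ is pure or mixed. In fact you go further than the paper by spelling out the $g(\epsilon)=a+\epsilon(b-a)$ computation and the mutual exclusivity of the two bullets, which the paper leaves implicit.
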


The proof of Theorem~\ref{th:esspm} follows from similar reasoning as the proof of Theorem~\ref{th:ess}~\cite{Maschler13:Game}.

(Note that technically the statements of Theorem~\ref{th:ess} and~\ref{th:esspm} can be logically weakened to require that only ``at least'' as opposed to ``exactly'' one of the conditions holds, since it is impossible for both to hold simultaneously.) 

There are also known results that allow us to categorize ESS with respect to the more common solution concept Nash equilibrium~\cite{Maschler13:Game}:

\begin{theorem}
If $x^*$ is a an evolutionarily stable strategy in a two-player symmetric game, then $(x^*,x^*)$ is a symmetric Nash equilibrium of the game.
\label{th:ESS-symm}
\end{theorem}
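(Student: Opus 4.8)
The plan is to show that if $x^*$ is an ESS, then $(x^*,x^*)$ satisfies the Nash equilibrium condition for a symmetric game, namely that $u_1(x,x^*) \le u_1(x^*,x^*)$ for every mixed strategy $x$ (by symmetry this single inequality suffices, since it also controls the column player's incentive to deviate). So fix an arbitrary mixed strategy $x$. If $x = x^*$ the inequality holds with equality and there is nothing to prove, so assume $x \neq x^*$.

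Now I would invoke Theorem~\ref{th:ess}, which gives the equivalent characterization of an ESS: for this $x \neq x^*$, either (i) $u_1(x,x^*) < u_1(x^*,x^*)$, or (ii) $u_1(x,x^*) = u_1(x^*,x^*)$ together with a strict inequality on the ``second-order'' terms. In case (i) we immediately get $u_1(x,x^*) < u_1(x^*,x^*) \le u_1(x^*,x^*)$, and in case (ii) we get $u_1(x,x^*) = u_1(x^*,x^*)$. In either case $u_1(x,x^*) \le u_1(x^*,x^*)$. Since $x$ was arbitrary, $x^*$ is a best response to $x^*$, and by the symmetry of the game the same computation shows that for the row player $x^*$ is a best response to the column player's $x^*$; hence $(x^*,x^*)$ is a symmetric Nash equilibrium.

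If instead one wished to argue directly from Definition~\ref{de:ess} without routing through Theorem~\ref{th:ess}, the argument is equally short: suppose for contradiction that $u_1(x,x^*) > u_1(x^*,x^*)$ for some $x \neq x^*$. Then by continuity of the left- and right-hand sides of the ESS inequality in $\epsilon$, the strict inequality at $\epsilon = 0$ fails in the wrong direction, so for all sufficiently small $\epsilon > 0$ we would have $(1-\epsilon)u_1(x,x^*) + \epsilon u_1(x,x) > (1-\epsilon)u_1(x^*,x^*) + \epsilon u_1(x^*,x)$, contradicting the existence of the required $\epsilon_0$. Hence $u_1(x,x^*) \le u_1(x^*,x^*)$ for all $x$, and symmetry finishes the proof.

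There is no real obstacle here; the only mild subtlety is making explicit why a single inequality $u_1(x,x^*)\le u_1(x^*,x^*)$ over all mixed $x$ is enough to certify a \emph{symmetric} Nash equilibrium of a symmetric game — i.e., checking the bilinear payoff identities $u_1(x,x^*) = u_2(x^*,x)$ that let the column player's deviation incentives be read off from the row player's, and noting that it suffices to test pure-strategy deviations by linearity. Everything else is a one-line appeal to Theorem~\ref{th:ess} (or to continuity in $\epsilon$).
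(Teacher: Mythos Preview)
Your proposal is correct. Note, however, that the paper does not actually prove Theorem~\ref{th:ESS-symm}: it is stated as a known result with a citation to Maschler et al., so there is no in-paper proof to compare against directly.

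That said, the paper \emph{does} prove the analogous statement for ESSPM (Theorem~\ref{th:ESSPM-symm}), and its argument matches your second approach: take the limit $\epsilon \to 0$ in the defining inequality and use continuity of the utility function to conclude $u_1(x,x^*) \le u_1(x^*,x^*)$, then finish by symmetry. Your first approach, routing through the equivalent characterization in Theorem~\ref{th:ess}, is a slightly different but equally valid path that avoids the limiting argument at the cost of invoking a separately proved equivalence. Both are standard and short; either would serve here.
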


\begin{theorem}
In a symmetric game, if $(x^*,x^*)$ is a strict symmetric Nash equilibrium then $x^*$ is an evolutionarily stable strategy.
\label{th:ESS-strict}
\end{theorem}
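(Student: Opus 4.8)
The plan is to reduce the claim to the algebraic characterization of ESS in Theorem~\ref{th:ess}. Recall that saying $(x^*,x^*)$ is a \emph{strict} symmetric Nash equilibrium means that $x^*$ is the \emph{unique} best response to itself, i.e.\ $u_1(x,x^*) < u_1(x^*,x^*)$ for every strategy $x \neq x^*$. But this is exactly the first of the two alternatives in Theorem~\ref{th:ess} holding for \emph{every} $x \neq x^*$, so that theorem immediately yields that $x^*$ is an ESS. The only things that need care are pinning down what ``strict'' means and making sure the strict inequality is available against \emph{mixed} deviations $x$, not merely pure ones.

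For the first point, I would observe that a strict Nash equilibrium is necessarily pure: if the support of $x^*$ contained two or more pure strategies, each of them would also be a best response to $x^*$ (by the equal-payoff property on the support), contradicting uniqueness of the best response. Hence $x^* = e_j$ for some pure strategy $j$, and strictness gives $u_1(e_i, e_j) < u_1(e_j, e_j)$ for every pure strategy $e_i \neq e_j$.

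For the second point, fix any mixed strategy $x \neq x^* = e_j$; then $x_j < 1$, so $x$ places total weight $\sum_{i \neq j} x_i = 1 - x_j > 0$ on pure strategies other than $j$. Using linearity of $u_1$ in its first argument,
\[
u_1(x, x^*) = x_j\, u_1(e_j, e_j) + \sum_{i \neq j} x_i\, u_1(e_i, e_j) \;<\; \Big( x_j + \sum_{i \neq j} x_i \Big) u_1(e_j, e_j) \;=\; u_1(x^*, x^*),
\]
where the strict inequality uses $\sum_{i \neq j} x_i > 0$ together with $u_1(e_i, e_j) < u_1(e_j, e_j)$. Thus the first bullet of Theorem~\ref{th:ess} holds for every $x \neq x^*$, and applying that theorem completes the proof. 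As an alternative that bypasses Theorem~\ref{th:ess}, one can argue straight from Definition~\ref{de:ess}: for each $x \neq x^*$ the term $(1-\epsilon)\big(u_1(x^*,x^*) - u_1(x,x^*)\big)$ is at least a positive constant times $(1-\epsilon)$ while $\epsilon\big(u_1(x^*,x) - u_1(x,x)\big)$ is $O(\epsilon)$, so a small enough $\epsilon_0 = \epsilon_0(x)$ makes Equation~\ref{eq:ess} hold on all of $(0,\epsilon_0)$.

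I do not expect a genuine obstacle here: the entire content is the one-line observation that strictness of the equilibrium is precisely the first alternative in the ESS characterization. The only mild subtleties are the reduction from mixed to pure deviations carried out above, and fixing the convention that a ``strict'' Nash equilibrium is one with a unique best response (which in particular forces $x^*$ to be a pure strategy).
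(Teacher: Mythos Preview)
Your argument is correct and follows the same route the paper takes for the analogous ESSPM result (Theorem~\ref{th:ESSPM-strict}): observe that a strict symmetric Nash equilibrium is precisely the first alternative in the characterization Theorem~\ref{th:ess}, and conclude. Note that the paper does not supply its own proof of Theorem~\ref{th:ESS-strict} (it is cited from \cite{Maschler13:Game}), and since the paper \emph{defines} a strict equilibrium by the condition $u_1(x,x^*) < u_1(x^*,x^*)$ for all mixed $x \neq x^*$, your intermediate steps---showing $x^*$ must be pure and then passing from pure to mixed deviations by convexity---are unnecessary under that convention, though they are correct and would be needed under the ``unique best response'' definition you adopt.
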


Recall that in a Nash equilibrium $x^*$, we have that $u_1(x,x^*) \leq u_1(x^*,x^*)$ for all mixed strategies $x$ (and similarly for the other players). In a strict equilibrium this requirement changes to $u_1(x,x^*) < u_1(x^*,x^*).$

Clearly every ESS is also an ESSPM, and so therefore ESS is a refinement of ESSPM. We can straightforwardly show a similar result to Theorem~\ref{th:ESS-symm} for ESSPM. This shows that ESSPM is a refinement of Nash equilibrium.
\begin{theorem}
If $x^*$ is a an ESSPM in a two-player symmetric game, then $(x^*,x^*)$ is a symmetric Nash equilibrium. %of the game.
\label{th:ESSPM-symm}
\end{theorem}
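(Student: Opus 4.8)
The plan is to verify directly that $(x^*,x^*)$ satisfies the definition of a symmetric Nash equilibrium. The profile is symmetric by construction, since both players use $x^*$, so it suffices to show that $x^*$ is a best response to $x^*$ for the first player, i.e. that $u_1(x,x^*) \leq u_1(x^*,x^*)$ for every mixed strategy $x$; the corresponding statement for the second player will then follow from symmetry of the game.

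First I would extract from Theorem~\ref{th:esspm} the inequality $u_1(x,x^*) \leq u_1(x^*,x^*)$ for every \emph{pure} strategy $x$. For $x = x^*$ (in the degenerate case $x^*$ is itself pure, or simply comparing $x^*$ to itself) this is a trivial equality, and for $x \neq x^*$ the theorem guarantees that at least one of its two bullet conditions holds; the first of these gives $u_1(x,x^*) < u_1(x^*,x^*)$ outright, and the second asserts $u_1(x,x^*) = u_1(x^*,x^*)$, so in either case $u_1(x,x^*) \leq u_1(x^*,x^*)$. Hence the inequality holds against all pure strategies.

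Next I would lift this to arbitrary mixed strategies using the bilinearity of the expected-payoff function in its first argument. Writing an arbitrary mixed strategy as a convex combination $x = \sum_j \lambda_j x^{(j)}$ of the pure strategies $x^{(j)}$, with $\lambda_j \geq 0$ and $\sum_j \lambda_j = 1$, we get $u_1(x,x^*) = \sum_j \lambda_j\, u_1(x^{(j)},x^*) \leq \left(\sum_j \lambda_j\right) u_1(x^*,x^*) = u_1(x^*,x^*)$. Thus $x^*$ is a best response to $x^*$ for the first player.

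Finally I would invoke symmetry of the game: $u_2(x^*,x) = u_1(x,x^*)$ for every $x$, so the bound just established shows $x^*$ is also a best response to $x^*$ for the second player. Hence no unilateral deviation is profitable, $(x^*,x^*)$ is a Nash equilibrium, and it is symmetric by construction. I do not anticipate a substantive obstacle; the only point requiring care is the pure-to-mixed reduction, and it is exactly the linearity of $u_1$ that makes the weaker, pure-mutation-only hypothesis of ESSPM already sufficient to force a genuine Nash equilibrium — the argument mirrors that of Theorem~\ref{th:ESS-symm}.
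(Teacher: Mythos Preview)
Your proof is correct and follows essentially the same route as the paper's: establish $u_1(x,x^*) \leq u_1(x^*,x^*)$ for every pure $x$, then extend to all mixed strategies by linearity and invoke symmetry. The only cosmetic difference is that you extract the pure-strategy inequality from the characterization in Theorem~\ref{th:esspm}, whereas the paper obtains it by letting $\epsilon \to 0$ in Equation~\ref{eq:esspm}, and the paper phrases the pure-to-mixed step contrapositively rather than via a direct convex combination.
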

\begin{proof}
Consider taking the limit as $\epsilon \rightarrow 0$ in Equation~\ref{eq:esspm}. From the continuity of the utility function (which is a standard assumption in game theory), this gives $u_1(x,x^*) \leq u_1(x^*,x^*)$ for all pure strategies $x$. This condition is sufficient to show that $x^*$ is a Nash equilibrium; if a player could profitably deviate from $x^*$ to a mixed strategy $x'$, then for all the pure strategies in the support of $x'$ he would obtain higher payoff against $x^*$ than by following $x^*$, which contradicts the fact that $u_1(x,x^*) \leq u_1(x^*,x^*)$ for all pure strategies. So $(x^*,x^*)$ is a Nash equilibrium, and is symmetric because the strategies for the players are the same.  
\end{proof}

We can also show a result similar to Theorem~\ref{th:ESS-strict}:

\begin{theorem}
In a symmetric game, if $(x^*,x^*)$ is a strict symmetric Nash equilibrium then $x^*$ is an ESSPM.
\label{th:ESSPM-strict}
\end{theorem}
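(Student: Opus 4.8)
The plan is to reduce the claim to the characterization in Theorem~\ref{th:esspm} and then observe that strictness of the equilibrium immediately supplies the first of its two alternatives for every pure deviation. First I would unwind the hypothesis: by the definition of a strict symmetric Nash equilibrium recalled just above, $u_1(x,x^*) < u_1(x^*,x^*)$ holds for every mixed strategy $x \neq x^*$. In particular, regarding each pure strategy as a degenerate mixed strategy, this strict inequality holds whenever $x$ is a pure strategy different from $x^*$.

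Next I would fix an arbitrary pure strategy $x \neq x^*$ and verify the dichotomy of Theorem~\ref{th:esspm}. Since $u_1(x,x^*) < u_1(x^*,x^*)$, the first bullet of that theorem is satisfied; and, as the paper already notes, the two listed conditions cannot hold simultaneously, so in fact exactly one of them holds. Because this is true for every pure $x \neq x^*$, Theorem~\ref{th:esspm} yields that $x^*$ is an ESSPM, which completes the proof. Alternatively, to avoid invoking Theorem~\ref{th:esspm}, I could argue straight from Definition~\ref{de:esspm}: for a fixed pure $x \neq x^*$, both sides of Equation~\ref{eq:esspm} are affine (hence continuous) functions of $\epsilon$ converging to $u_1(x,x^*)$ and $u_1(x^*,x^*)$ respectively as $\epsilon \to 0$, so the strict inequality at $\epsilon = 0$ persists on some interval $(0,\epsilon_0)$ with $\epsilon_0 = \epsilon_0(x) > 0$.

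I do not anticipate a real obstacle here; this is the ESSPM analogue of Theorem~\ref{th:ESS-strict} and the argument is essentially immediate. The only points needing minor care are the bookkeeping observation that ``strict Nash'' already quantifies over all mixed deviations and hence a fortiori over the pure ones, and — if the direct route is taken — the routine continuity-in-$\epsilon$ step that underlies the equivalence between Definition~\ref{de:esspm} and Theorem~\ref{th:esspm} in the first place.
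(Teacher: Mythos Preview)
Your proposal is correct and follows essentially the same approach as the paper: use the strict-equilibrium hypothesis to conclude that $u_1(x,x^*) < u_1(x^*,x^*)$ for all mixed (hence pure) deviations $x \neq x^*$, thereby satisfying the first condition of Theorem~\ref{th:esspm} and establishing that $x^*$ is an ESSPM. Your additional direct route via Definition~\ref{de:esspm} and continuity in $\epsilon$ is also fine, though the paper does not bother with it.
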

\begin{proof}
If $(x^*,x^*)$ is a strict symmetric Nash equilibrium, then $x^*$ satisfies the first condition from Theorem~\ref{th:esspm} for all mixed strategies, and therefore also satisfies the condition for all pure strategies $x$. Therefore, $x^*$ is an ESSPM.
\end{proof}

%\section{Differentiation between ESSPM and ESS}
%\label{se:diff}
Table~\ref{fi:counterexample} gives an example of a game that has an ESSPM which is not an ESS, demonstrating that the two are not equivalent and that ESSPM is in fact a strictly weaker solution concept (note that it can be shown relatively easily that ESSPM and ESS are equivalent for 2x2 games). Strategy A is evolutionarily stable against both pure strategies B or C (both attain 2 against A but obtain 0 against themselves while A obtains 1). However, consider the mixed-strategy mutation that plays B and C with probability $\frac{1}{2}$. This strategy obtains expected payoff $\frac{1}{2}2 + \frac{1}{2}2 = 2$ against A (which is the same payoff that A obtains against itself). Against itself this strategy obtains expected payoff $\frac{1}{4}(0 + 4 + 4 + 0) = 2$, while A obtains payoff of $\frac{1}{2}(1 + 1) = 1$ against it. 

\begin{table}[!ht]
\begin{center}
\begin{tabular}{c|*{3}{c|}}
&A &B &C \\ \hline
A &(2,2) &(1,2) &(1,2) \\ \hline
B &(2,1) &(0,0) &(4,4) \\ \hline
C &(2,1) &(4,4) &(0,0) \\ \hline
\end{tabular}
\caption{Strategy A is evolutionarily stable against B or against C alone, but against a mutant $\frac{1}{2}$B + $\frac{1}{2}$C, A does worse than the mutant strategy.}
\label{fi:counterexample}
\end{center}
%\vspace{-0.1cm}
\end{table}

\section{Algorithm}
\label{se:algo}
Given the formulation from Theorem~\ref{th:esspm}, we can cast the problem of computing an ESSPM as the following feasibility program:
%\footnotesize
%\small
\begin{align}
\label{eq:ESSPM}
&u_1(x,x^*) \leq u_1(x^*,x^*) - \epsilon_{x1} + M_{x1}y_x \mbox{ for all } x \in [1,\ldots,m]\\
&u_1(x,x^*) \leq u_1(x^*,x^*) + M_{x2}(1-y_x) \mbox{ for all } x \in [1,\ldots,m]\\
&u_1(x^*,x^*) \leq u_1(x,x^*) + M_{x3}(1-y_x) \mbox{ for all } x \in [1,\ldots,m]\\
&u_1(x,x) \leq u_1(x^*,x) - \epsilon_{x2} + M_{x4}(1-y_x) \mbox{ for all } x \in [1,\ldots,m]\\
&x^*_i \geq 0 \mbox{ for all } i \in [1,\ldots,m]\\
&\sum_i x^* _i = 1\\
&y_i \mbox{ binary for all } i \in [1,\ldots,m]
\end{align}
\normalsize

We assume that we are initially given an $m \times m$ matrix $A$ of utilities to player 1 (we assume that the game is symmetric since ESS is defined only for symmetric games, so payoffs to player 2 are given by matrix $B = A^T$). If $v$ and $w$ are pure strategies, the expression $u_1(v,w)$ corresponds to the $v$'th row and $w$'th column of matrix $A$. In general if $v$ and $w$ are vectors of mixed strategies (i.e., probability distributions over pure strategies), then the utility for player 1 is given by $u_1(v,w) = v^T A w$. Without loss of generality we can assume that all values in $A$ are nonnegative and between 0 and 1 (note that any affine transformation does not affect strategic aspects of the game, so we can add a sufficiently large constant to all entries and normalize by dividing by the largest entry to achieve this condition).

The variables in the formulation are $x^*_i$ and $y_i$ for $1 \leq i \leq m$. The $x^*_i$ correspond to the ESSPM for player 1 (and equivalently player 2) that we are seeking to compute, and $y_i$ correspond to indicator variables denoting which of the conditions holds from Theorem~\ref{th:esspm} for the given component.

We set $\epsilon$ to be a very small floating point number slightly larger than 0, such as $\epsilon = 0.00001.$ $M$ denotes a constant that exceeds the maximum difference in absolute value of utility between two strategy profiles. We would prefer to have $M$ be as small as possible to make the inequality tighter, so we will set $M = 1 + \epsilon$, given our assumption that all payoffs are between 0 and 1. Note that our framework is flexible enough to allow for different selections for the different $\epsilon_{xi}$ and $M_{xi}$ parameters corresponding to the different constraints if desired, though for simplicity we set them all to be the same $\epsilon$ and $M$ as described. The equivalence of the feasibility program to the conditions of Theorem~\ref{th:esspm} follows from a rule for representing either-or constraints in a mixed-integer program by adding in auxiliary binary indicator variables~\cite{Bisschop06:AIMMS}. If $y_x = 0$, then the first constraint ensures that $u_1(x,x^*) \leq u_1(x^*,x^*) - \epsilon_{x1}$, which is equivalent to $u_1(x,x^*) < u_1(x^*,x^*)$ since $\epsilon_{x1}$ is negligible (note that strict inequalities must be converted to weak inequalities to be solved by most standard optimization algorithms). (Numerical precision considerations will be elaborated on further below.) And if $y_x = 1$ then the constraint states $u_1(x,x^*) \leq u_1(x^*,x^*) - \epsilon_{x1} + M_{x1}$, which essentially makes the constraint inactive since it will be true for all strategies $x^*$ since we have chosen $M_{x1}$ to be sufficiently large. The translation between the other constraints and the theorem can be obtained similarly. Note that this formulation is a quadratically-constrained mixed-integer feasibility program (QCMIP). It is quadratically constrained because $u_1(x^*,x^*)$ equals $x^{*T} A x^*$, which is a quadratic form involving a product of the variables $x^*$. It is mixed integral because there are both continuous and binary variables. And it is a feasibility program because there is no objective function, though there are several candidates that could potentially be used that may help aid performance in practice. 

To implement our algorithm we explored the QVXPY and Gurobi software packages~\cite{Park17:General,Gurobi14:Gurobi} and will present analysis in Section~\ref{se:example}. Beyond finding a suitable optimization solver, there are several further challenges for applying our algorithm. A game may contain no ESSPM, in which case our algorithm should return that the optimization model is infeasible. Due to degrees of approximation error and numerical precision there is some chance of a false negative (our algorithm outputs that the model is infeasible while an ESSPM actually exists). Note that if we were solving for a different solution concept that is known to exist, such as Nash equilibrium, we would know for sure that an output of infeasibility would indicate a false negative; however for ESSPM we cannot be sure whether the game actually has no ESSPM or we are in a false negative (and so our algorithm is \emph{incomplete}). We will explore the extent of this issue in the experiments. A second issue is that according to the alternative definition of ESSPM given by Theorem~\ref{th:esspm}, the conditions apply just to $x \neq x^*$, which we do not encode directly in our algorithm, which considers deviations for all $x$ since it cannot readily tell whether $x = x^*$. This can be problematic for the case when $x^*$ is a pure strategy, and the conditions for $x = x^*$ would be enforced when they should not be (which would lead to a false negative). To address this we will first apply an efficient preprocessing procedure to determine if a pure strategy ESSPM exists, which is described in Algorithm~\ref{al:esspm-pure}. 

%\normalsize
\begin{algorithm}[!ht]
\caption{Preprocessing procedure for computing pure-strategy ESSPM}
\label{al:esspm-pure} 
\textbf{Inputs}: Payoff matrix $M$ with $m$ pure strategies per player.
\begin{algorithmic}
\For {$i = 1$ to $m$} 
\For {$j = 1$ to $m$, $j \neq i$} 
\State Test whether the conditions of Theorem~\ref{th:esspm} hold using $x^* = i$, $x = j.$
\EndFor 
\State If the conditions held for all $j$, output $i$ as pure-strategy ESSPM. 
\EndFor 
\State Conditions of Theorem~\ref{th:esspm} did not hold for any pure strategy, so there is no pure ESSPM 
\end{algorithmic}
\end{algorithm}

Since our algorithm has elements of approximation error and numerical imprecision, we must develop a metric to evaluate the quality of the candidate solution. For approximating Nash equilibrium the standard metric is the maximum amount a player can gain by deviating from the candidate equilibrium strategy profile $x^*$, i.e., $\epsilon = \max_i \max_j (u_i(x_j,x^*_{-i}) - u_i(x^*, x^*_{-i}))$ (where $u_i(x,y)$ is the utility to player $i$ when player $i$ follows $x$ and the opponent(s) follow $y$, and $x^*_{-i}$ is the strategy vector for $x^*$ excluding player $i$). It is common for Nash equilibrium algorithms to be analyzed by their convergence with respect to this $\epsilon$~\cite{Gilpin12:First}. As one notable example an $\epsilon$-Nash equilibrium has been computed for two-player limit Texas hold 'em for a sufficiently small $\epsilon$ that a human cannot differentiate between the approximation and an exact equilibrium over a lifetime of human play~\cite{Bowling15:Heads-up}. For ESSPM we propose a similar metric, given by Algorithm~\ref{al:esspm-error}. One challenge is that it may not be clear which of the conditions to consider, since small amounts of numerical instability can make it impossible to satisfy a hard equality. To address this we include a precision parameter $\delta$, which we set equal to $10^{-7}$. Note that this could be set equal to or different from the $\epsilon_{xi}$ of our algorithm (in general it seems most appropriate to set $\delta$ to be smaller than the $\epsilon_{xi}$).  

\begin{algorithm}[!ht]
\caption{Procedure to compute degree of approximation error of candidate ESSPM strategy}
\label{al:esspm-error} 
\textbf{Inputs}: \small Payoff matrix with $m$ pure strategies per player, candidate ESSPM $x^*$, degree of precision $\delta$ \normalsize
\begin{algorithmic}
\For {$i = 1$ to $m$} 
\If {$u_1(i,x^*) - u_1(x^*,x^*) > \delta$} 
\State $\theta[i] \gets u_1(i,x^*) - u_1(x^*,x^*)$ 
\ElsIf {$u_1(i,x^*) - u_1(x^*,x^*) > (-1) \cdot \delta$} 
\If {$u_1(i,i) - u_1(x^*,i) > 0$} 
\State $\theta[i] \gets u_1(i,i) - u_1(x^*,i)$ 
\EndIf 
\Else 
\State $\theta[i] \gets 0$ 
\EndIf 
\EndFor 
\Return $\max_i \theta[i]$ 
\end{algorithmic}
\end{algorithm}

\section{Applying the algorithm to solve the Mutation-Population Game}
\label{se:example}
We first implemented the algorithm using the CVXPY non-convex quadratically-constrained quadratic program (QCQP) Suggest-and-Improve framework~\cite{Park17:General}, since it is the only publicly-available software we are aware of for solving our QCMIP formulation (which is nonconvex). We considered all of their algorithms and found best performance with the coordinate-descent algorithm, because it focuses more on trying to find and maintain feasibility as opposed to the other algorithms which are more focused on optimality. This is appropriate for our problem since it is a feasibility program with no objective function. We create an initial solution for the algorithm by solving the semidefinite program (SDP) relaxation, which we then attempt to improve iteratively using coordinate descent. For the parameters we set $\epsilon_{xi} = 0.00001$ and $M_{xi} = 1 + \epsilon_{xi}$.

We tested this implementation on the well-studied Mutation-Population game (Figure~\ref{fi:mutation-population})~\cite{Maschler13:Game}. In this game there are two types of animals: hawks (who are aggressive), and doves (who are peaceful). When an animal invades the territory of another animal of the same species, a hawk will aggressively repel the invader, while a dove will yield and be driven out of the territory. The game has one symmetric Nash equilibrium in which Dove is selected with probability $\frac{1}{5}$ and Hawk $\frac{4}{5}$. This strategy profile also constitutes an ESS and an ESSPM. (Note that the game also has two asymmetric Nash equilibria (Dove, Hawk) and (Hawk, Dove), both of which are neither ESS nor ESSPM.)

\begin{figure}[!ht]
\begin{center}
\begin{tabular}{c|*{2}{c|}}
&Dove &Hawk \\ \hline
Dove &(4,4) &(2,8) \\ \hline
Hawk &(8,2) &(1,1) \\ \hline
\end{tabular}
\caption{The Mutation-Population Game}
\label{fi:mutation-population}
\end{center}
\end{figure}

When we ran our algorithm on this game, the initial solution obtained from the SDP relaxation was (0.637, 0.363), and the solution after the first iteration of the algorithm was $(0.245, 0.755)$. However, the solution varied drastically between iterations as the algorithm progressed and did not converge over 1000 iterations, though the degree of violation of the constraints generally decreased. We therefore concluded that the CVXPY-QCQP framework was not suitable for solving our problem, despite the fact that good performance has been observed for previous benchmark domains~\cite{Park17:General}).

We next decided to run experiments with Gurobi's mixed-integer programming solver~\cite{Gurobi14:Gurobi}. Gurobi does not have a solver for non-convex quadratically-constrained programs; however, we can apply a known technique that approximates products of variables using piecewise linearization by adding additional constraints known as ``SOS2'' constraints~\cite{Bisschop06:AIMMS}. (Note that Gurobi can solve programs that have quadratic objectives and constraints but only if the matrices are positive semi-definite, which is not the case for our problem.) This approach introduces an additional layer of approximation due to the piecewise-linear approximation of quadratic variables, and also introduces an additional parameter to use for the number of breakpoints in the linearization. However, it is able to quickly and effectively solve the mutation-population game. For example, using $k=20$ breakpoints with $\epsilon = \num{e-5}$ it outputs a solution of $(0.19972, 0.80028)$ in 0.242 seconds, which is extremely close to the optimal solution and produces an approximation error of $\num{1.4e-4}.$  

\section{Experiments}
\label{se:experiments}
Given the results of the preceding section we decided to use Gurobi's solver for the remainder of our experiments. First, in Section~\ref{se:experiments-mp} we further explore the effect of varying the choices of parameters $k$ and $\epsilon$ on the runtime and performance. In Section~\ref{se:exp-uniform} we look at performance on uniform-random symmetric 2x2 games. In addition to runtime and performance we also explore the false negative rate of our algorithm. It is known that all 2x2 games that satisfy a certain condition contain an ESS (and therefore ESSPM). The condition is that $a_{11} \neq a_{21}$ and $a_{12} \neq a_{22}$, where $A$ is the payoff matrix~\cite{vanDamme87:Stability}. This implies that an ESSPM will exist with probability 1 in uniformly-generated 2x2 games, since there is probability zero on the exact payoff values needed being equal. Therefore, any infeasibility output of our algorithm will be a false negative. This does not hold for games with more than two strategies, as the rock-paper-scissors example in the introduction demonstrates. At the other extreme, it has been shown that as the number of pure strategies approaches infinity the probability of existence of an ESS with support of size two converges to 1 for games with payoffs chosen according to a distribution with ``exponential and faster decreasing tails'' which includes uniform (the \emph{support} of a randomized strategy is the set of pure strategies that are played with nonzero probability)~\cite{Hart08:Evolutionarily}. In Section~\ref{se:exp-uniform} we also explore 3x3 games and discuss further scalability; in Section~\ref{se:ex-gamut} we explore a generalized game class generated by the GAMUT repository~\cite{Nudelman04:Run}; and in Section~\ref{se:ex-cancer} we conduct experiments on a recently-proposed 4x4 game model of cancer~\cite{Hurlbut18:Game}.  

\subsection{Mutation-Population Game}
\label{se:experiments-mp}
In the Mutation-Population Game we consider the effect of selection of parameters $k$ (number of breakpoints) and $\epsilon$ on the runtime and error (computed by Algorithm~\ref{al:esspm-error}). Table~\ref{ta:mp-fixed-epsilon} shows the effect of varying $k$ keeping $\epsilon = 10^{-5}$ fixed. For the values of $k$ up to 30 the runtime is under one second, before increasing to 1.7 seconds for $k = 50$ and 6.6 seconds for $k = 100$. Interestingly for $k = 5$ our algorithm outputs infeasibility, while it produces low errors for larger $k$ which decrease monotonically. This demonstrates that choosing too few breakpoints can be problematic and lead to a false negative. Using 10--30 breakpoints leads to low runtimes and relatively small errors. These experiments were using up to 64 cores (threads) for the algorithm (though the algorithm also performs competitively with significantly fewer threads).

\begin{table*}[!ht]
\centering
%\footnotesize
\begin{tabular}{|*{7}{c|}} \hline
\# Breakpoints &5 &10 &20 &30 &50 &100\\ \hline
Time(s) &0.04 &0.08 &0.13 &0.54 &1.69 &6.6\\ \hline
Error &INFEASIBLE &0.001 &\num{1.4e-4} &\num{5.5e-5} &\num{1.3e-5} &\num{5.9e-6}\\ \hline
\end{tabular}
\caption{Running time %(seconds) 
and approximation error for Mutation-Population Game for different breakpoints with $\epsilon = 10^{-5}$.}
%different number of breakpoints in piecewise-linear approximation using $\epsilon = 10^{-5}$.}
\label{ta:mp-fixed-epsilon}
\end{table*}
\normalsize

We next explored the effect of varying $\epsilon$ while keeping $k = 20$ fixed (Table~\ref{ta:mp-fixed-breakpoints}). Changing $\epsilon$ seems to have minimal effect on the runtime for this game. For the three largest $\epsilon$ values our algorithm outputs that the model is infeasible. This demonstrates that careful selection of $\epsilon$ in addition to $k$ is needed to prevent a false negative. The results also show that the error does not necessarily decrease monotonically with smaller $\epsilon$ (though this conclusion is drawn from a small sample size).   

\begin{table*}[!ht]
\centering
%\footnotesize
\begin{tabular}{|*{7}{c|}} \hline
$\epsilon$ &\num{1e-1} &\num{1e-2} &\num{1e-3} &\num{1e-4} &\num{1e-5} &\num{1e-6}\\ \hline
Time(s) &0.13 &0.31 &0.28 &0.35 &0.13 &0.25\\ \hline
Error &INFEASIBLE &INFEASIBLE &INFEASIBLE &\num{6.6e-5} &\num{1.4e-4} &\num{1.9e-4}\\ \hline
\end{tabular}
\caption{Running time %(seconds) 
and approximation error for Mutation-Population Game for different values of $\epsilon$ using 20 breakpoints.}
\label{ta:mp-fixed-breakpoints}
\end{table*}
\normalsize

\subsection{Uniform random games}
\label{se:exp-uniform}
We next considered 2x2 symmetric games with all payoffs generated uniformly at random (we generated 4 payoffs for the matrix $A$ for player 1, and then assume that the payoff matrix for player 2 is $A^T$). We generated and solved 10,000 games, using 20 breakpoints with $\epsilon = 10^{-5}$. Of these, the pre-processing step confirmed that 7506 had a pure-strategy ESSPM. Of the remaining games, our algorithm computed an optimal solution for 2454, and output that 40 were infeasible. The average running time for the games with an optimal (non-pure) solution was 0.18s, and the average running times for the infeasible cases was 0.26s. The average approximation error for the optimal solutions was $\num{1.4e-4}.$ 

These results lead to several immediate observations. First, a very large percentage games contain a pure-strategy ESSPM (close to 75\% in our sample). Note that this is not surprising, since we are considering symmetric games. If the payoff matrix is $A$, then the game has a pure-strategy ESSPM whenever $a_{11} > a_{21}$ or $a_{22} > a_{12}$, which is exactly 75\%. Note that for randomly-generated games the possibility of identical payoffs for different strategy profiles has probability zero. This may make it difficult for the second condition of Theorem~\ref{th:esspm} to hold (the probability of it holding when $x^*$ is a pure strategy is 0, though it could be possible for mixed strategies). However, rather than trying to optimize our algorithm just for random games we elect to keep our general-purpose framework as the second condition can be important for many games.% we are interested in. %(If we wanted to optimize our algorithm specifically for uniformly-random games we could have ignored all constraints corresponding to the second condition; however this would be problematic for other game classes and specific games we may be interested in, and we prefer to develop a general-purpose algorithm. This does however indicate a limitation of experimenting on uniform randomly-generated games.) 

The observation that our algorithm outputs infeasibility for 40 games (which constitutes 1.6\% of the games in our sample that did not contain a pure-strategy ESSPM) shows that false negatives can occur even for reasonably chosen values of the parameters $k = 20$ and $\epsilon = \num{e-5}.$ On the positive side, the runtimes and average approximation errors are quite low. So for the games that our algorithm does output a valid solution, the solution is obtained quickly with low error. And furthermore a valid solution is found for a very large percentage of the game instances (98.4\% of the games that do not contain a pure-strategy ESSPM). If our algorithm outputs infeasibility for a specific game we are interested in, we can always solve it again with larger values of $k$ and/or smaller values of $\epsilon$ until a solution is found (note that this would only work for $2 \times 2$, and for larger games we could not be sure whether a solution exists and may have to give up eventually).

%\subsection{Uniform random 3x3 games and scalability considerations}
%\label{se:exp-uniform-3x3}
We also experimented on 3x3 uniform-random symmetric games using the same parameter values $k = 20$, $\epsilon = 10^{-5}$. Of the 1000 games in our sample, our algorithm output that 709 had a pure-strategy ESSPM, it found an optimal solution for 39, and output that 252 were infeasible.  The average running time for the games with an optimal (non-pure) solution was 12.7s, and the average running times for the infeasible cases was 14.7s. The average approximation error for the optimal solutions was $\num{3.3e-4}.$ 

These results indicate as before that a large fraction of random 3x3 games have a pure-strategy ESSPM (the theoretical value should be $\frac{19}{27} \equiv 0.704$, and in general for $m$ pure strategies should be $1 - \left( \frac{m-1}{m} \right)^m$). In contrast to the $m = 2$ results, for $m = 3$ we observe that our algorithm outputs infeasibility for a very large fraction of the games with no pure-strategy ESSPM (86.7\% in contrast to 1.6\% for $m = 2$ using the same parameter values). Unlike the $m=2$ case we can not be sure for a given game whether our algorithm has produced a false negative or whether no ESSPM exists, so we cannot conclude whether this disparity is due to an increased false negative rate or nonexistence of ESSPM. For the games that our algorithm did solve optimally the running time was only 12.7s on average (and the running time to determine infeasibility was only 14.7s on average), and the approximation error was quite low for the games which our algorithm output an optimal solution.

\renewcommand{\tabcolsep}{1pt}
\begin{table*}[!ht]
\centering
\footnotesize
\begin{tabular}{|*{11}{c|}} \hline
\#Strategies &\#Games &k &$\epsilon$ &\#Pure &\#Optimal &\#Infeasible &Optimal avg. runtime &Infeasible avg. runtime &Avg. error\\ \hline
2 &10,000 &20 &$10^{-5}$ &7506 &2454 &40 &0.18s &0.26s &\num{1.4e-4}\\   \hline
%3 &100 &20 &$10^{-5}$ &72 &7 &21 &10.8s &14.1s &\num{4.6e-4}\\ \hline
3 &1000 &20 &$10^{-5}$ &709 &39 &252 &12.7s &14.7s &\num{3.3e-4}\\ \hline
3 &1000 &10 &$10^{-5}$ &712 &68 &220 &1.2s &1.7s &0.002\\ \hline
%4 &100 &10 &$10^{-5}$ &70 &5 &25 &6.0s &8.0s &0.006\\ \hline
4 &1000 &10 &$10^{-5}$ &671 &48 &281 &6.3s &9.0s &0.004\\ \hline
%5 &100 &10 &$10^{-5}$ &72 &4 &24 &20.6s &67.9s &0.006\\ \hline
5 &1000 &10 &$10^{-5}$ &656 &50 &293 &29.7s &70.2s &0.005\\ \hline
\end{tabular}
\caption{Results for uniform random games.}
\label{ta:random}
\end{table*}
\normalsize

\renewcommand{\tabcolsep}{4pt}

We also performed experiments for 3x3 games using $k=10$, with $\epsilon = 10^{-5}$ as before. Over 1000 games our algorithm output that 712 had a pure-strategy ESSPM, found an optimal solution for 68, and output that 220 were infeasible. The average running time for the games with an optimal (non-pure) solution was 1.2s, and the average running times for the infeasible cases was 1.7s. The average approximation error for the optimal solutions was $0.002.$ Using fewer breakpoints decreased the infeasibility rate over this sample from 86.7\% to 76.4\%, while leading to a significant reduction in running time, but an order of magnitude increase in approximation error.  

We also experimented on 4x4 games using $k=10$, $\epsilon = 10^{-5}$. Over 1000 games the algorithm output that 671 had a pure-strategy ESSPM, found an optimal solution for 48, and output that 281 were infeasible. The average running time for the games with an optimal (non-pure) solution was 6.3s, and the average running times for the infeasible cases was 9.0s. The average approximation error for the optimal solutions was $0.004.$ For 5x5 games with the same parameters we found 656 pure-strategy ESSPM, 50 optimal solutions, and 293 infeasible. The average runtime for the optimal cases was 29.7s, and 70.2s for the infeasible cases. The average error for the optimal solutions was $0.005$.   

%Our experiments were performed on a laptop using a single core. With access to more resources our algorithms would be scalable to larger game instances, using Gurobi's (or other solvers') parallel optimization algorithms. Optimal selection of parameter values for $k$ and $\epsilon$ and false negative rates would likely depend heavily on the number of cores and amount of memory available.

\subsection{Experiments on games from GAMUT repository}
\label{se:ex-gamut}
In addition to uniform randomly-generated games we also experimented on the class of Chicken games created using the GAMUT generator~\cite{Nudelman04:Run}. These games generalize the Mutation-Population game that was previously considered. For a payoff matrix $A$, Chicken games satisfy $a_{21} > a_{11} > a_{12} > a_{22}$. We generated 1,000 games from this class where, as before, we normalize payoffs to be between 0 and 1 before applying our algorithm. We only consider Chicken games with $m = 2$ pure strategies per player, as GAMUT does not support a generalized version. Results using $k = 20, \epsilon=\num{e-5}$ are given in Table~\ref{ta:gamut}. Note that none of the games have a pure-strategy ESSPM, while our algorithm incorrectly outputs infeasibility for 1.8\% of the games (we know that an ESSPM exists in nondegenerate 2x2 games). The average runtimes and errors of our algorithm are low for this class.

\begin{table*}[!ht]
\centering
%\small
\begin{tabular}{|*{7}{c|}} \hline
\#Pure &\#Optimal &\#Infeasible &Optimal avg. runtime &Infeasible avg. runtime &Avg. error\\ \hline
0 &982 &18 &0.13s &0.19s &\num{1.3e-4}\\ \hline
\end{tabular}
\caption{Results for 1,000 games of Chicken generated by GAMUT, using $k = 20$, $\epsilon = 10^{-5}$.} %These games all have $m = 2$ actions per player. }
\label{ta:gamut}
\end{table*}
\normalsize

Several popular games such as Prisoner's Dilemma and Battle of the Sexes contain a pure-strategy ESS (and therefore ESSPM), so experimenting on them is not useful. GAMUT also supports a class called ``Hawk and Dove,'' however the way this is generated produces only games that have a pure-strategy ESS. (Note that several different definitions of ``Hawk and Dove games'' have been proposed in the literature. For example, while the Population-Mutation game we considered does not fall into GAMUT's class, it is considered a Hawk-Dove game by Maschler et al.~\cite{Maschler13:Game}.)

\subsection{Experiments on cancer game models}
\label{se:ex-cancer}
Our final set of experiments is on a game recently proposed as a model for cancer~\cite{Hurlbut18:Game}. Each player represents a tissue cell and can adopt one of the four phenotypes that might be present within a cancer tumor.

\begin{quotation}
Stromal cancer cells (A-) have no particular benefit or cost unique to themselves, and they are considered a baseline neutral cell within the context of the model. In contrast, angiogenesis-factor producing cells (A+) vascularize the local tumor area which consequently introduces a nutrient rich blood to the benefit of all interacting cells. Nutrient recruitment expands when A+ cells interact with one another. Cytotoxic cells (C) release a chemical compound which harms heterospecific cells and increases their rate of cell death. The cytotoxic cells benefit from the resulting disruption in competition caused by the interaction. For simplicity, our model presumes that cytotoxic cells are themselves immune to this class of agent. Finally, proliferative cells (P) possess a reproductive or metabolic advantage relative to the other cell types. In our model this advantage does not compound with the nutrient enrichment produced by vascularization when A+ cells are present; however, it does place the proliferative cell at a greater vulnerability to cytotoxins~\cite{Hurlbut18:Game}.
\end{quotation}

The payoffs are given by Table~\ref{ta:cancer}, where the parameters are defined in Table~\ref{ta:parameters} (recall that the algorithm will normalize the payoffs to fall in [0,1]).

\begin{table}[!ht]
\begin{center}
\begin{tabular}{c|*{4}{c|}}
&A- &A+ &P &C \\ \hline
A- &1 &1+d &1 &1-c \\ \hline
A+ &1-a+d &1-a+d+f &1-a+d &1-c-a+d \\ \hline
P &1+g &1+d+g &1+g &(1+g)(1-c) \\ \hline
C &1-b+c &1-b+d+e &1-b+e &1-b \\ \hline
\end{tabular}
\caption{Payoff matrix of symmetric game with four cancer cell phenotype strategies.}
\label{ta:cancer}
\end{center}
%\vspace{-0.1cm}
\end{table}

\begin{table}[!ht]
\begin{center}
\begin{tabular}{cl}
Parameter &Interpretation \\ \hline
a & Cost of producing angiogenesis factors \\ 
b & Cost of producing cytotoxin \\ 
c & Cost of interaction with cytotoxin \\
d & Resource benefit when interacting with A+ \\ 
e & Exploitation benefit for C when cytotoxin damages others \\ 
f & Synergistic resource benefit when two A+ cells interact \\ 
g & Reproductive advantage of P cell \\ \hline 
\end{tabular}
\caption{Parameters used within the game model.}
\label{ta:parameters}
\end{center}
%\vspace{-0.1cm}
\end{table}

That paper does not specify values for the parameters, so we select them all to be uniformly at random within the interval [0,0.5], which ensures that all payoffs are nonnegative. Our results over 1,000 randomly generated games within this class are given in Table~\ref{ta:cancer-results}.

\begin{table*}[!ht]
\centering
\small
\begin{tabular}{|*{9}{c|}} \hline
$k$ &$\epsilon$ &\#Pure &\#Optimal &\#Infeasible &Optimal avg. runtime &Infeasible avg. runtime &Avg. error\\ \hline
10 &$10^{-5}$ &869 &58 &73 &4.08s &9.36s &0.007\\ \hline
20 &$10^{-5}$ &867 &11 &122 &82.58s &152.61s &0.002\\ \hline %For k = 20
\end{tabular}
\caption{Results for 1,000 cancer games generated with game parameters uniformly in [0,0.5].}%, using algorithm parameter values $k = 10$, $\epsilon = 10^{-5}$.} %These games all have $m = 4$ actions per player. }
\label{ta:cancer-results}
\end{table*}
\normalsize

\section{Conclusion}
\label{se:conclusion}
We have presented an algorithm for computation of a restriction of Evolutionarily Stable Strategies (ESS) where robustness is guaranteed only against pure-strategy mutations and not necessarily against all mixed-strategy mutations, which we call Evolutionarily Stable Strategies against Pure Mutations (ESSPM). ESS is a well-studied refinement of Nash equilibrium that is biologically-motivated and has been applied to cancer. Previously known computational results for ESS computation are negative hardness results and algorithms based on brute-force support enumeration. Our algorithm is based on a novel quadratically-constrained mixed integer feasibility program formulation. 

%We have presented an optimization formulation and algorithm that constitutes the first positive result for computation of Evolutionarily Stable Strategies (ESS) (previously the known computational results were all negative hardness results, though several brute-force enumeration procedures have been applied to small games). ESS is a well-studied refinement of Nash equilibrium that is biologically-motivated and has been applied to cancer. We consider a restriction where robustness is guaranteed only against pure-strategy mutations and not necessarily against all mixed-strategy mutations, which we call Evolutionarily Stable Strategies against Pure Mutations (ESSPM). The algorithm is based on a new quadratically-constrained mixed integer feasibility program formulation. 

We experimented with different parameter settings for the Population-Mutation Game, which were then used for experiments in uniform-random games, in a generalized class of Chicken games created by the GAMUT generator, and in a recently proposed game model of cancer. We observe that for certain classes a large fraction of games has a pure-strategy ESSPM, and that our algorithm can sometimes output a false negative infeasibility for 2x2 games where an ESS is known to exist. However, in several experiments the false negative rate is small, and the running times and errors produced by our algorithm are generally quite low for the games that are solved. We considered two commercial solvers CVXPY-QCQP and Gurobi, focusing our experiments on Gurobi after preliminary analysis indicated that it was better suited for our setting. The algorithm takes two input parameters, the number $k$ of breakpoints for the piecewise-linear approximations of variable products, and the degree $\epsilon$ of constraint approximation to use. Our experiments investigated the effect of selection of these parameters on performance.  

We would like to further tune the parameters for epsilon and the number of breakpoints and further explore the cases of infeasibility that we encountered. It is not clear for the games with more than 2 actions whether the infeasibility was due to the approximation error of the algorithm or to the lack of existence of an ESSPM. Eventually we would like to generalize our approach to compute an ESS that is robust to all mutation strategies (as in the standard definition of ESS), not just to pure strategy mutations as we have done. It is possible that this can be accomplished within the same mixed-integer non-convex quadratically-constrained optimization framework. Our approach could potentially apply to generalizations to multiple players by approximating the products of variables in a similar way. We would also like to further explore scalability and to experiment on problems with real-world applicability. We suspect that performance can be significantly improved using an improved solver for our optimization formulation (this could potentially be accomplished without the need for the piecewise-linear approximation). We are particularly interested in considering more complex cancer models. We expect for larger real-world problems that have ESS with medium-sized support that our algorithm would outperform the existing algorithms such as Haigh's which iterates over all possible supports, as has been demonstrated to be the case for Nash equilibrium algorithms. 

%We would also like to see how our algorithm can perform on a problem generated from real data, perhaps one that has application to medicine such as the cancer eradication game described in the introduction, and to experiment with our algorithm on games with larger number of actions per player.
%using parallel solvers on a machine with multiple cores and large amounts of memory to further explore scalability. 

\bibliographystyle{plain}
%\bibliography{C://Users/sam/Documents/FromBackup/Research/refs/dairefs}
\bibliography{D://FromBackup/Research/refs/dairefs}
%\bibliography{D://Research/refs/dairefs}

\end{document}